
\documentclass[letterpaper, 10 pt, conference]{ieeeconf}  

\IEEEoverridecommandlockouts                              
\overrideIEEEmargins

\usepackage{amsmath,amssymb, graphicx,epstopdf}
\usepackage{cite}
\usepackage[tight,footnotesize]{subfigure}
\usepackage[usenames,dvipsnames]{color}

\newtheorem{Theorem}{Theorem}
\newtheorem{Lemma}{Lemma}

\newtheorem{Remark}{Remark}

\newtheorem{Condition}{Condition}

\newtheorem{Definition}{Definition}
\newcommand{\IEEEbq}{\begin{IEEEeqnarray}{rCl}}
\newcommand{\IEEEeq}{\end{IEEEeqnarray}}

\def \node {\mathcal N}
\def \generator {\mathcal G}
\def \load {\mathcal L}


\title{\LARGE \bf
Decentralized Primary Frequency Control in Power Networks
}

\author{Changhong Zhao and Steven Low
\thanks{This work was supported by NSF NetSE grant CNS 0911041, ARPA-E grant DE-AR0000226, Southern California Edison, National Science Council of Taiwan R.O.C. grant NSC 103-3113-P-008-001, Caltech Resnick Institute, and 
California Energy Commission's Small Grant Program through Grant 57360A/11-16. }
\thanks{C. Zhao and S. Low are with the Department of Electrical Engineering, California Institute of Technology, Pasadena, CA 91125, USA 
        {\tt\small \{czhao, slow\}@caltech.edu}}%
}%

\begin{document}

\maketitle
\thispagestyle{empty}
\pagestyle{empty}

\begin{abstract}

We augment existing generator-side primary frequency control with load-side control that are local, ubiquitous, and continuous. The mechanisms on both the generator and the load sides are decentralized in that their control decisions are functions of locally measurable frequency deviations. These local algorithms interact over the network through nonlinear power flows. We design the local frequency feedback control so that any equilibrium point of the closed-loop system is the solution to an optimization problem that minimizes the total generation cost and user disutility subject to power balance across entire network. With Lyapunov method we derive a sufficient condition ensuring an equilibrium point of the closed-loop system is asymptotically stable. Simulation demonstrates improvement in both the transient and steady-state performance over the traditional control only on the generators, even when the total control capacity remains the same.

\end{abstract}

\section{INTRODUCTION}

It is important to maintain the frequency of a power system tightly around its nominal value for the quality of power delivery and safety of infrastructures. Frequency is mainly affected by real power imbalance through dynamics of rotating units in the system, and hence frequency is usually controlled by controlling real power injections. Frequency/real power control is traditionally implemented on the generator side and usually consists of different layers that work at different timescales in concert as described in \cite{Ilic2007hierarchical, KianiAnnaswamy2012}. In this paper we focus on the layer of primary frequency control, which operates at the timescale of seconds and provides preliminary power balance to stop frequency excursion following a change in power injection. Generator-side primary frequency control is also known as droop control, in which a speed governor adjusts the setpoint of the turbine to change power generation based on local frequency feedback.

On the other hand, ubiquitous continuous distributed load participation in frequency control has started to play a rising role since the last decade or so. The idea is that household appliances such as refrigerators, heaters, ventilation systems, air conditioners, and plug-in electric vehicles can be controlled to help manage power imbalance and regulate frequency. This idea dates back to the late 1970s \cite{Schweppe1980}, and has been extensively studied recently \cite{Short2007, brooks 2010, CallawayHiskens2011}, with particular focus on the primary timescale to benefit from the fast-acting capability of frequency-responsive loads \cite{donnelly2010frequency, molina2011decentralized}. There are also load-side frequency control field trials carried out around the world \cite{Hammerstrom2007, UKProgram2008}. Simulation-based studies and field trials above have shown significant improvement in performance and reduction in the need for generator reserves.

For both generator and load-side frequency controls, there are two issues: design and stability. In the literature, design of generator frequency control either happens at the secondary layer or above to incorporate economic dispatch without changing the primary layer \cite{Ilic2007hierarchical, KianiAnnaswamy2012}, or modifies the existing primary control for stability \cite{JiangCaiDorseyQu1997, Siljak2002}. In designing load-side frequency control, most of the existing works do not take into account the disutility to users for participating in control. Previous studies, e.g., \cite{BergenHill1981, tsolas1985structure}, on stability of power networks with linear frequency dependent loads can be generalized to the case with load-side frequency feedback control, but most of them ignore the dynamics of governors and turbines in generator frequency control loops. Moreover, all the studies above consider generator-side and load-side controls separately, without investigating the interactions between them.    

In this paper we provide a systematic method to design generator and load-side primary frequency control in a unified model. The control is completely decentralized in that every generator and load makes its decision based on local frequency sensing and its own cost of generation or consumption. However, the control attains equilibrium points which maximize the benefit across the entire network as defined by an optimization problem called \emph{optimal frequency control} (OFC). We apply Lyapunov method to derive a sufficient condition for the equilibrium points of the closed-loop system to be asymptotically stable. Simulation with a more realistic model shows that the proposed control improves both the transient and steady-state frequency, compared with traditional control on generators only. Interestingly this is achieved with the same total control capacity.

 In our previous work \cite{ZhaoTopcuLiLow2014} we designed a similar load control and proved its stability for a simplified model with linearized power flows and without generator-side control. This paper extends \cite{ZhaoTopcuLiLow2014} with nonlinear power flows and by adding governor and turbine dynamics to the classical power network model.  

The rest of this paper is organized as follows. Section \ref{sec:model} describes a dynamic model of power networks. Section \ref{sec:design} formulates the OFC problem which characterizes the desired equilibrium and guides the design of decentralized primary frequency control. Section \ref{sec:stability} derives a sufficient condition for the equilibrium points of the closed-loop system to be asymptotically stable. Section \ref{sec:simulation} is a simulation-based case study to show the performance of the proposed control. 
Section \ref{sec:conclusion} concludes the paper and discusses future work. 




\section{POWER NETWORK MODEL}
\label{sec:model}

Let $\mathbb{R}$ denote the set of real numbers. For a set $\node$, let $|\node|$ denote its cardinality. A variable without a subscript usually denotes a vector with appropriate components, e.g., $\omega= (\omega_j, j \in \node)\in  \mathbb{R}^{|\node|}$. For $a,b \in \mathbb{R}$, $a \leq b$, the expression $[\cdot]^b_a$ denotes $\max\left\{\min\{\cdot,~b\},~a\right\}$. 
For a matrix $A$, let $A^T$ denote its transpose. For a square matrix $A$, the expression $A>0$ ($A<0$) means it is positive (negative) definite. For a signal $\omega(t)$ of time $t$, let $\dot \omega$ denote its time derivative ${d \omega}/{d t}$. 

We combine the classical structure preserving model for power network in \cite{BergenHill1981} and the generator speed governor and turbine models in \cite{BergenVittal2000}, \cite{JiangCaiDorseyQu1997}. The power network is modeled as a graph $(\node,\mathcal{E})$ where $\node=\{1,\dots,|\node|\}$ is the set of buses and $\mathcal{E} \subseteq \node \times \node$ is the set of lines connecting those buses. We use $(i,j)$ to denote the line connecting buses $i,j\in \node$, and assume that $(\node,\mathcal{E})$ is directed, with an arbitrary orientation, so that if $(i,j)\in \mathcal E$ then $(j, i)\not\in \mathcal E$. We also use ``$i: i\rightarrow j$'' and ``$k: j\rightarrow k$'' respectively to denote the set of buses $i$ that are 
predecessors of bus $j$ and the set of buses $k$ that are successors of bus $j$. We assume without loss of generality that
 $(\node, \mathcal{E})$ is connected, and make the following assumptions which have been well-justified for transmission networks:
\begin{itemize}
\item The lines $(i,j) \in \mathcal E$ are lossless and characterized by their reactances $ x_{ij}$.
\item The voltage magnitudes $|V_j|$ of buses $j \in \node$ are constants.
\item Reactive power injections on buses and reactive power flows on lines are not considered.
\end{itemize}

A subset $\generator \in \node$ of the buses are fictitious buses representing the internal of generators. Hence we call $\generator$ the set of generators and $\load := \node\setminus\generator$ the set of load buses.\footnote{We call all the buses in $\load$ load buses without distinguishing between generator buses (buses connected directly to generators) and actual load buses,  since they are treated in the same way mathematically.} We label the buses so that $\generator=\{1,\dots,|\generator|\}$ and $\load=\{|\generator|+1,\dots,|\node|\}$.

The voltage phase angle of bus $j \in \node$, with respect to the rotating framework of nominal frequency $\omega_0 = 120 \pi \text{~rad/s}$, is denoted $\theta_j$. Let $\omega_j$ denote the frequency deviation of bus $j$ from the nominal frequency, and we have, for $j \in \node$, that 
\IEEEbq\label{eq:model:freq}
\dot \theta_j = \omega_j.
\IEEEeq 
The frequency dynamics on buses $j \in \node$ are described by the swing equation
\IEEEbq\label{eq:model:swing}
M_j \dot \omega_j  = - D_j \omega_j  + p_j - F_j(\theta)
\IEEEeq
where $M_j>0$ for $j \in \generator$ are parameters characterizing the inertia of rotors in generators and $M_j = 0$ for $j \in \load$. The terms $D_j \omega_j$ with parameters $D_j>0$ for all $j \in \node$ represent (for $j \in \generator$) the deviations in power dissipations due to friction, or (for $j \in \load$) the deviations in frequency-dependent loads, e.g., induction motors. Below we call $D_j \omega_j$ the power dissipation without distinguishing between these two cases. The variable $p_j$ denotes the real power injection to bus $j$, which is the mechanic power injection to the generator for $j\in \generator$, and is the negative of real power load for $j\in \load$. Moreover, 
\IEEEbq\nonumber
F_j(\theta) &:=& \sum\limits_{k:j\rightarrow k} Y_{jk} \sin(\theta_j - \theta_k) \\ 
& & - \sum\limits_{i:i\rightarrow j} Y_{ij} \sin(\theta_i - \theta_j)  \label{eq:model:PF}
\IEEEeq
is the net real power flow out of bus $j$, with parameter $Y_{jk}:=\frac{|V_j||V_k|}{x_{jk}}$ the maximum real power flow on line $(j,k)$. 

Associated with a generator $j \in \generator$ is a system of speed governor and turbine. Their dynamics are described as
\IEEEbq\label{eq:model:governor}
\dot a_j &=& -\frac{1}{\tau_{g,j}} a_j + \frac{1}{\tau_{g,j}}  p_j^c \\
\label{eq:model:turbine}
\dot p_j &=& -\frac{1}{\tau_{b,j}} p_j +  \frac{1}{\tau_{b,j}} a_j
\IEEEeq
where $a_j$ is the position of the valve of the turbine, $p_j^c$ is the control command to the generator, and $p_j$, as introduced above, is the (mechanic) power injection to the generator. The time constants $\tau_{g,j}$ and $\tau_{b,j}$ characterize respectively the time-delay in governor action and the (approximated) fluid dynamics in turbine. There is usually a frequency feedback term $-\frac{1}{R_j} \omega_j$ added to the right-hand-side of \eqref{eq:model:governor}, known as the frequency droop control. Here this term is merged into $p_j^c$ to allow for a general form of frequency feedback control.

Equations \eqref{eq:model:freq}--\eqref{eq:model:turbine} specify a dynamic system with state variables $(\theta, \omega, a_\generator, p_\generator)$ where 
\IEEEbq\nonumber
\theta := \{\theta_1,\dots,\theta_{|\node|}\},&\quad& \omega := \{\omega_1,\dots,\omega_{|\node|}\}  \\ \nonumber
a_\generator := \{a_1,\dots,a_{|\generator|}\}, &\quad& p_\generator := \{p_1,\dots,p_{|\generator|}\}
\IEEEeq
and input variables $(p_\generator^c, p_\load)$ where
\IEEEbq\nonumber
p_\generator^c := \{p_1^c,\dots,p_{|\generator|}^c\},&\quad& p_\load:=\{p_{|\generator+1|},\dots,p_{|\node|}\}. 
\IEEEeq
In the sequel, $(p_\generator^c, p_\load)$ are feedback control to be designed based on the measurement of frequency deviations, i.e., $(p_\generator^c(\omega), p_\load(\omega))$. Parameters $\underline p_j \leq \overline p_j$ specify the bounds of the control variables, e.g., $\underline p_j\leq p_j(\omega) \leq \overline p_j$ or $\underline p_j\leq p_j^c(\omega) \leq \overline p_j$. Note that if $\underline p_j = \overline p_j$ then such bound specifies a constant, uncontrollable input on bus $j$.\footnote{We assume without loss of generality that in OFC below such $j$ are not included in the summation over $\node$ so that Slater's condition is satisfied and strong duality holds.} To motivate the work below, we first define equilibrium points for this closed-loop dynamic system. 

\begin{Definition}\label{def:equilibrium}
An \emph{equilibrium point} of the dynamic system \eqref{eq:model:freq}--\eqref{eq:model:turbine} with feedback control $(p_\generator^c(\omega), p_\load(\omega))$, or a \emph{closed-loop equilibrium} for short, is $(\theta^*, \omega^*, a_\generator^*, p_\generator^*, p_\generator^{c,*}, p_\load^*)$, where $\theta^*$ is a vector function of time and $(\omega^*, a_\generator^*, p_\generator^*, p_\generator^{c,*}, p_\load^*)$ are vectors of real numbers, such that
\begin{IEEEeqnarray}{rClr}
&&p_\generator^{c,*} = p_\generator^c(\omega^*), \quad p_\load^* = p_\load(\omega^*)& \nonumber \\
&&d\theta_j^*/dt = \omega_j^* \quad &  \forall j \in \node  \nonumber \\
 &&- D_j \omega_j^*  + p_j^* - F_j (\theta^*) = 0 \quad&  \forall j \in \node \label{eq:def:eq:3}\\
&&p_j^{c,*} = a_j^* = p_j^* \quad & \forall j \in \generator \label{eq:def:eq:4} \\
&&\omega_i^* = \omega_j^* = \omega^* & \quad \forall i, j \in \node . \footnote{We abuse the notation by using $\omega^*$ to denote both the vector $(\omega_1^*,\dots , \omega_{|\node|}^*)$ and the common value of its components. Its meaning should be clear from the context. }  \label{eq:def:eq:5}
\IEEEeq
\end{Definition}
In the definition above, \eqref{eq:def:eq:3}\eqref{eq:def:eq:4} are obtained by setting right-hand-sides of \eqref{eq:model:swing}\eqref{eq:model:governor}\eqref{eq:model:turbine} to zero, and \eqref{eq:def:eq:5} ensures constant $F(\theta^*)$ at equilibrium points, by \eqref{eq:model:PF}. From \eqref{eq:def:eq:5}, at equilibrium points all the buses are synchronized to the same frequency. The system typically has multiple equilibrium points as will be explained in Section \ref{sec:stability}. We also write the equilibrium point as  $(\theta^*, \omega^*, a_\generator^*,  p_\generator^{c,*}, p^*)$ where $p^*:=(p_\generator^*, p_\load^*)$, when we do not need to distinguish between state variables $p_\generator^*$ and control variables $p_\load^*$.

\section{DECENTRALIZED PRIMARY FREQUENCY CONTROL}\label{sec:design}

An initial point of the dynamic system \eqref{eq:model:freq}--\eqref{eq:model:turbine} corresponds to the state of the system at the time of fault-clearance after a contingency, or the time at which an unscheduled change in power injection occurs during normal operation. In either case, it is required that the system trajectory, driven by feedback control $(p_\generator^c(\omega), p_\load(\omega))$, converges to a \emph{desired} equilibrium point. In this section, we formalize the criteria for desired equilibrium points by formulating an optimization problem called \emph{optimal frequency control} (OFC), and use OFC to guide the design of control. In Section \ref{sec:stability} we will study the stability of the closed-loop system with the proposed control.   

\subsection{Optimal Frequency Control Problem}
\label{subsec:ofc}

Our objective for the design of feedback control is that, at any closed-loop equilibrium point, $(p^*, d^*)$ is the solution of the following OFC problem, where $d_j^* = D_j \omega_j^*$ are the deviations in power dissipations on buses $j \in \node$. 

\noindent
\textbf{OFC:}
\IEEEbq
\min_{p, d}  & \quad&
	\sum_{j \in \node} \left(  c_j (p_j) + \frac{1}{2 D_j} d_j^2 \right)
\label{eq:ofc.1}
\\
\text{subject to}    &  \quad&\sum_{j \in \node} (  p_j - d_j  )  =  0
\label{eq:ofc.2}
\\
 &  \quad & \underline p_j \leq p_j \leq \overline p_j,    \quad\quad \forall j \in \node.
\label{eq:ofc.3}
\IEEEeq
The term $c_j(p_j)$ in objective function \eqref{eq:ofc.1} is generation cost (if $j \in \generator$) or disutility to users for participating in control (if $j \in \load$). The term $\frac{1}{2D_j} d_j^2$ is a cost for the frequency-dependent deviation in power dissipation, and hence implicitly penalizes frequency deviation at equilibrium. More detail for the motivation of \eqref{eq:ofc.1}, such as why the weighting factor of the second term is set as $\frac{1}{2D_j}$, can be found in \cite{ZhaoTopcuLiLow2014}. The constraint \eqref{eq:ofc.2} ensures power balance over the entire network, and \eqref{eq:ofc.3} are specified bounds on power injections.

We assume the following condition throughout the paper.
\begin{Condition}\label{cond.1}
OFC is feasible. The cost functions $c_j$ are strictly 
convex and twice continuously differentiable on $[ \underline p_j, \overline p_j]$. 
\end{Condition}
\begin{Remark}\label{remark.1}
A load $-p_j$ on bus $j \in \load$ results in a user utility $u_j(-p_j)$, and hence the disutility function can be defined as $c_j(p_j)= -u_j(-p_j)$. In the literature on demand response \cite{LiChenLow2011DR} and economic dispatch \cite{KianiAnnaswamy2012, DorflerPorcoBullo2014}, the load disutility functions or generation cost functions usually satisfy Condition \ref{cond.1}, and in many cases are quadratic functions. See \cite[Sec. III-A]{ZhaoTopcuLiLow2014} for more examples. 
\end{Remark}

\subsection{Design of Decentralized Frequency Feedback Control}
\label{subsec:design}

We use OFC to guide our design of the frequency feedback control. We now specify our design and then prove that any resulting closed-loop equilibrium point is the solution of the OFC problem. Similar to \cite{ZhaoTopcuLiLow2014}, we design $(p_\generator^c(\omega), p_\load(\omega))$ as
\begin{IEEEeqnarray}{rclr}\label{eq:decentralizedcontrol_1}
p_j^ c (\omega_j)&=& \left[(c_j')^{-1}(-\omega_j)\right]_{\underline p_j}^{\overline p_j} &\quad\quad\forall j \in \generator\\
p_j (\omega_j) &=&  \left[(c_j')^{-1}(-\omega_j)\right]_{\underline p_j}^{\overline p_j}  &\quad\quad\forall j \in \load. \label{eq:decentralizedcontrol_2}
\IEEEeq
The function $(c_j')^{-1}(\cdot)$ is well defined due to Condition \ref{cond.1}.
Note that this control is completely decentralized in that for every generator and load, the control decision is a function of frequency deviation measured at its local bus. Only its own cost function and bound need to be known, and no explicit communication or information of system parameters is required. The following theorem shows that this design of control fulfills our objective proposed above.
\begin{Theorem}\label{thm.1}
For any equilibrium $(\theta^*, \omega^*, a_\generator^*,  p_\generator^{c,*}, p^*)$ of the dynamic system \eqref{eq:model:freq}--\eqref{eq:model:turbine} with feedback control \eqref{eq:decentralizedcontrol_1}\eqref{eq:decentralizedcontrol_2}, $(p^*, d^*)$ is the \emph{unique} solution of OFC, where $d_j^* = D_j \omega_j^*$ for $j \in \node$.
\end{Theorem}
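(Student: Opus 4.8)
The plan is to show that any closed-loop equilibrium produces a point $(p^*,d^*)$ satisfying the Karush--Kuhn--Tucker (KKT) conditions of OFC, and then to use convexity to upgrade this to \emph{global} optimality and \emph{uniqueness}. Observe first that OFC is a convex program: its objective \eqref{eq:ofc.1} is a sum of the strictly convex costs $c_j$ (Condition \ref{cond.1}) and the strictly convex quadratics $\frac{1}{2D_j}d_j^2$ (since $D_j>0$), hence strictly convex in $(p,d)$, while the constraints \eqref{eq:ofc.2}\eqref{eq:ofc.3} are affine. Consequently the minimizer, if it exists, is unique, and (with strong duality guaranteed by the Slater condition noted in the footnote) the KKT conditions are both necessary and sufficient for it.

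First I would read off the equilibrium relations. For $j\in\generator$, combining \eqref{eq:def:eq:4} with the control law \eqref{eq:decentralizedcontrol_1} gives $p_j^*=p_j^{c,*}=[(c_j')^{-1}(-\omega^*)]_{\underline p_j}^{\overline p_j}$; for $j\in\load$ the same expression follows directly from \eqref{eq:decentralizedcontrol_2}, using the common synchronized frequency \eqref{eq:def:eq:5}. Together with $d_j^*=D_j\omega_j^*=D_j\omega^*$, this yields, for every $j\in\node$,
\IEEEbq
p_j^* & = & \left[(c_j')^{-1}(-\omega^*)\right]_{\underline p_j}^{\overline p_j}, \quad d_j^*=D_j\omega^*. \nonumber
\IEEEeq

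Next I would exhibit a scalar multiplier $\nu=-\omega^*$ for the balance constraint \eqref{eq:ofc.2} and verify stationarity. Stationarity in the unconstrained variable $d_j$ reads $d_j/D_j+\nu=0$, i.e.\ $d_j=-\nu D_j$, which is matched exactly by $d_j^*=D_j\omega^*$. For $p_j$, introducing the box multipliers and complementary slackness, the standard argument (using that strict convexity of $c_j$ makes $c_j'$ strictly increasing, so $(c_j')^{-1}$ is well defined) shows the KKT conditions are equivalent to the saturated form $p_j=[(c_j')^{-1}(\nu)]_{\underline p_j}^{\overline p_j}$; with $\nu=-\omega^*$ this is precisely $p_j^*$, and dual feasibility and complementary slackness follow by inspecting the interior and the upper/lower saturated cases of the projection.

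The one substantive step is primal feasibility, $\sum_{j\in\node}(p_j^*-d_j^*)=0$. Here I would use the equilibrium condition \eqref{eq:def:eq:3} to write $p_j^*-d_j^*=F_j(\theta^*)$ for each $j$, and then invoke conservation of power flow: summing the definition \eqref{eq:model:PF} over all buses, every line term $Y_{ab}\sin(\theta_a-\theta_b)$ occurs once with a plus sign in $F_a$ and once with a minus sign in $F_b$, so $\sum_{j\in\node}F_j(\theta)=0$ for all $\theta$. Hence $\sum_j(p_j^*-d_j^*)=\sum_j F_j(\theta^*)=0$, giving \eqref{eq:ofc.2}. With every KKT condition verified, $(p^*,d^*)$ is optimal, and strict convexity over the convex feasible set makes it the \emph{unique} solution. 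The main obstacle is exactly this coupling: the equilibrium conditions are stated per bus, and one must recognize that their sum collapses the nonlinear network flows to zero, thereby recovering the single aggregate balance constraint of OFC.
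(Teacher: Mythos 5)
Your proposal is correct and follows essentially the same route as the paper's proof: the equilibrium relations \eqref{eq:def:eq:3}--\eqref{eq:def:eq:5} together with the control design \eqref{eq:decentralizedcontrol_1}\eqref{eq:decentralizedcontrol_2} are matched to the KKT conditions of OFC, with the synchronized frequency $\omega^*$ serving (up to a sign convention) as the multiplier of the balance constraint \eqref{eq:ofc.2}, and with the conservation identity $\sum_{j\in\node}F_j(\theta^*)=0$ supplying primal feasibility. The only cosmetic difference is the uniqueness step: you invoke strict convexity of the objective $(p,d)\mapsto\sum_j\bigl(c_j(p_j)+\tfrac{1}{2D_j}d_j^2\bigr)$, whereas the paper notes that the reduced KKT system \eqref{eq:kkt_2}\eqref{eq:kkt_3}\eqref{eq:kkt_simplified_1} has a unique solution because $\sum_{j\in\node}(p_j-d_j)$ is strictly decreasing in the multiplier $\lambda$; both arguments are valid.
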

\begin{proof}
Consider the dual of OFC with variables $(\lambda, \mu^{+}, \mu^{-})$ where $\lambda$ is a scalar and $\mu^{+}$ and $\mu^{-}$ are both vectors in $\mathbb{R}^{|\node|}$. Since OFC is a convex problem with differentiable objective and constraint functions, by \cite[Sec. 5.5.3]{Boyd2004Convex}, $(p, d; \lambda, \mu^{+}, \mu^{-})$ is optimal for OFC and its dual if and only if it satisfies the following Karus-Kuhn-Tucker (KKT) conditions:
\begin{IEEEeqnarray}{lr}
c_j'(p_j) + \lambda + \mu_j^+ - \mu_j^- = 0  &\quad \forall j \in \node  \label{eq:kkt_1}
\\
d_j-\lambda D_j= 0  &\quad \forall j \in \node  \label{eq:kkt_2}
\\
\sum_{j \in \node} (  p_j - d_j  )  =  0
& \label{eq:kkt_3}
 \\
\underline p_j \leq p_j \leq \overline p_j   & \quad \forall j \in \node \label{eq:kkt_4}
\\
\mu_j^+ \geq 0,\quad \mu_j^- \geq 0  & \quad \forall j \in \node 
\label{eq:kkt_5}
 \\
 \mu_j^+(p_j - \overline p_j) =  \mu_j^-(\underline p_j - p_j) =0  &  \quad \forall j \in \node \label{eq:kkt_6}
\IEEEeq
where \eqref{eq:kkt_1}\eqref{eq:kkt_2} are stationarity conditions, \eqref{eq:kkt_3}\eqref{eq:kkt_4} are primal feasibility, \eqref{eq:kkt_5} is dual feasibility and \eqref{eq:kkt_6} is complementary slackness. With $(\mu^+, \mu^-)$ eliminated, \eqref{eq:kkt_1}--\eqref{eq:kkt_6} are equivalent to \eqref{eq:kkt_2}\eqref{eq:kkt_3} and the following equation
\begin{IEEEeqnarray}{rClr}
p_j &=&  \left[(c_j')^{-1}(-\lambda)\right]_{\underline p_j}^{\overline p_j} &\quad\quad\forall j \in \node.\label{eq:kkt_simplified_1}
\IEEEeq
Note that \eqref{eq:kkt_2}\eqref{eq:kkt_3}\eqref{eq:kkt_simplified_1} has a unique solution $(p^0,d^0; \lambda^0)$ since, by Condition \ref{cond.1}, the left-hand-side of \eqref{eq:kkt_3} is a strictly decreasing function of $\lambda$. On the other hand, for any closed-loop equilibrium $(\theta^*, \omega^*, a_\generator^*, p_\generator^{c,*}, p^*)$ we have 
\begin{IEEEeqnarray}{Clr}
&d_j^* =  D_j \omega^*&\quad\quad\forall j \in \node\label{eq:eq_condition_1}
\\
&\sum_{j \in \node} (  p_j^* - d_j^*  )  =  0 &\label{eq:eq_condition_2}
\\
&p_j^* =  \left[(c_j')^{-1}(-\omega^*)\right]_{\underline p_j}^{\overline p_j} &\quad\quad\forall j \in \node \label{eq:eq_condition_3}
\IEEEeq
where \eqref{eq:eq_condition_1} is due to \eqref{eq:def:eq:5}, and \eqref{eq:eq_condition_2} results from adding up \eqref{eq:def:eq:3} and eliminating $F_j(\theta)$ for $j \in \node$, while \eqref{eq:eq_condition_3} is a result of \eqref{eq:def:eq:4} and the design of control \eqref{eq:decentralizedcontrol_1}\eqref{eq:decentralizedcontrol_2}. Hence $(p^*, d^*; \omega^*) = (p^0,d^0; \lambda^0)$ is the unique solution of    
\eqref{eq:kkt_2}\eqref{eq:kkt_3}\eqref{eq:kkt_simplified_1} and therefore the unique optimal solution of OFC and its dual (ignoring $\mu^+$, $\mu^-$ which may not be unique at optimal).  
\end{proof}

\begin{Remark}
In general for primary frequency control, $\omega^*$ is not zero, which means the frequency is not restored to the nominal value at equilibrium. Traditionally, recovery to nominal frequency needs centralized generator-side secondary frequency control known as AGC \cite{KianiAnnaswamy2012}. Recently, decentralized frequency recovery schemes based on local feedback control and neighborhood communication have been developed from a simplified model, on generators \cite{Li2014AGC}, or on loads \cite{Mallada2014FPC}. Their stability under the more realistic model in this paper still remains to be investigated.
\end{Remark}

Though Theorem \ref{thm.1} ensures the optimality (in the sense of OFC) of closed-loop equilibrium points, it does not guarantee the existence of any equilibrium point. We assume the following condition for the remaining of this paper, which implies an equilibrium point always exists.
\begin{Condition}\label{cond.2}
For the unique optimal $(p^*, d^*)$ of OFC, the set of power flow equations 
\IEEEbq\label{eq:PFequations}
F_j (\theta) =  p_j^* - d_j^*   &\quad\quad \forall j \in \node 
\IEEEeq
is feasible, i.e., has a solution $\theta^*$.
\end{Condition}

\section{STABILITY OF CLOSED-LOOP EQUILIBRIUM}\label{sec:stability}

In this section, we use Lyapunov method to study the stability of the closed-loop system, and derive a sufficient condition for a closed-loop equilibrium point to be asymptotically stable. Our approach for stability analysis is compositional \cite{topcu2009compositional}, in that we find Lyapunov function candidates for the open-loop network and the generator frequency control loops separately, and add them up to form a Lyapunov function.

Theorem \ref{thm.1} implies that for a given system \eqref{eq:model:freq}--\eqref{eq:model:turbine} with all parameters and feedback control \eqref{eq:decentralizedcontrol_1}\eqref{eq:decentralizedcontrol_2} specified, the component $(\omega^*, a_\generator^*, p_\generator^{c,*}, p^*)$ is unique across all equilibrium points. However, this is not the case for $\theta^*$. In general, given $(p^*, d^*)$, \eqref{eq:PFequations} has more than one solutions (if it is feasible), even if we ignore any difference by multiple of $2 \pi$ and regard two solutions as the same one if they have the same difference in $\theta_j$ for all $j \in \node$ \cite{Varaiya1981}. Due to the existence of multiple equilibria, we focus on their \emph{local} asymptotic stability. In particular, we study whether a closed-loop equilibrium is (locally) asymptotically stable if the corresponding open-loop equilibrium, defined below, is asymptotically stable. 

\begin{Definition}\label{def:open-loop-equilibrium}
For an equilibrium $(\theta^*, \omega^*, a_\generator^*, p_\generator^{c,*}, p^*)$ of the closed-loop system, $(\theta^*, \omega^*)$ is an equilibrium for the dynamic system \eqref{eq:model:freq}--\eqref{eq:model:PF} with constant input $p^*$, and is called the corresponding \emph{open-loop equilibrium.}
\end{Definition}

We see that by ``open-loop'' we not only fix the input but also ignore the generator governor and turbine dynamics \eqref{eq:model:governor}\eqref{eq:model:turbine}. We take the same energy function as in \cite{BergenHill1981}, which was used for stability analysis of the open-loop system, as part of the Lyapunov function candidate for the closed-loop system. Given a closed-loop equilibrium $(\theta^*, \omega^*, a_\generator^*, p_\generator^{c,*}, p^*)$, the energy function is 
\IEEEbq
\mathcal V_0 &=& \frac{1}{2}\sum_{j\in \generator} M_j (\omega_j - \omega_j^*)^2  \nonumber
\\
&~&+ \sum_{(i,j) \in \mathcal{E}} \int_{\theta_{ij}^*}^{\theta_{ij}} Y_{ij}(\sin u - \sin \theta_{ij}^* ) du   \label{eq:energyfunction_0}
\IEEEeq    
where $\theta_{ij} := \theta_i - \theta_j$. The requirement for the open-loop equilibrium $(\theta^*, \omega^*)$ to be asymptotically stable imposes a constraint on the range of $\theta_{ij}^*$, so that the integral term in \eqref{eq:energyfunction_0} is positive definite with origin at $\theta_{ij}^*$, for $\theta_{ij}$ in a neighborhood of $\theta_{ij}^*$ \cite{BergenHill1981}. In other words, $\mathcal V_0$ is a valid candidate for Lyapunov function only if the open-loop equilibrium is asymptotically stable. In practice an open-loop equilibrium $(\theta^*, \omega^*)$ typically satisfies $-\frac{\pi}{2}  < \theta_{ij}^* < \frac{\pi}{2}$ for all $(i,j) \in \mathcal E$, which implies it is asymptotically stable. Due to this reason $-\frac{\pi}{2}  < \theta_{ij}^* < \frac{\pi}{2}$ is often considered as a security constraint for power flow problems \cite{DorflerPorcoBullo2014}.

 For simplicity define $ \tilde \omega:= \omega - \omega^*$ and use similar notations for deviations of other variables from their equilibrium values. Taking time derivative of \eqref{eq:energyfunction_0} along the trajectory of $(\theta,\omega)$, we have 
\IEEEbq
\dot{\mathcal{V}}_0 &=& \sum_{j\in \generator}  M_j \tilde \omega_j \dot \omega_j  + \sum_{(i,j) \in \mathcal{E}}  Y_{ij} (\sin \theta_{ij} - \sin\theta_{ij}^* ) (\omega_i - \omega_j) \nonumber
\\  
&=&\sum_{j\in \generator} \tilde \omega_j  (p_j - D_j\omega_j - F_j(\theta)) \nonumber \\
&~& +  \sum_{j\in \node}  \tilde \omega_j (F_j(\theta) - F_j(\theta^*))   \label{eq:energyfunction_dot_0:1}
\\
& = & \sum_{j\in \node}  \tilde \omega_j  (p_j - D_j \omega_j - F_j(\theta^*))  \label{eq:energyfunction_dot_0:2}
\\
& = & - \sum_{j\in \node} D_j\tilde \omega_j^2 + \sum_{j\in \node} \tilde \omega_j \tilde p_j \label{eq:energyfunction_dot_0:3}
\\
& \leq &  - \sum_{j\in \load} D_j\tilde \omega_j^2 + \sum_{j\in \generator} (-D_j \tilde \omega_j^2 + \tilde \omega_j \tilde p_j ) \label{eq:energyfunction_dot_0:4}
\IEEEeq 
where \eqref{eq:energyfunction_dot_0:1} results from \eqref{eq:model:swing}\eqref{eq:model:PF}, and \eqref{eq:energyfunction_dot_0:2} holds since $p_j - D_j\omega_j - F_j(\theta) = 0 $ for $j \in \load$. We get \eqref{eq:energyfunction_dot_0:3} by replacing $F_j(\theta^*)$ in \eqref{eq:energyfunction_dot_0:2} with $p_j^* - D_j \omega_j^*$ (by \eqref{eq:def:eq:3}), and get \eqref{eq:energyfunction_dot_0:4} since
\IEEEbq \nonumber
\tilde \omega_j \tilde p_j = (\omega_j- \omega_j^*)(p_j(\omega_j) - p_j(\omega_j^*)) \leq 0   \quad\quad \forall  j \in \load 
\IEEEeq
where $p_j(\omega_j)$ is a non-increasing function of $\omega_j$ due to its design  \eqref{eq:decentralizedcontrol_2}.

We now study the other parts of Lyapunov function candidate, corresponding to the governor and turbine systems of generators $j \in \generator$. By moving the origin of \eqref{eq:model:governor}\eqref{eq:model:turbine} to the closed-loop equilibrium, we have
\IEEEbq\label{eq:turbine_governor:compact}
\dot {\tilde{y}}_j = A_j \tilde y_j + B_j \tilde p^c_j \quad\quad \forall j \in \generator
\IEEEeq
where $\tilde y_j := [\tilde a_j, \tilde p_j ]^T = [a_j-a_j^*, p_j - p_j^* ]^T$, and $\tilde p^c_j:= p_j^c(\omega_j) - p_j^c(\omega_j^*)$, and 
\IEEEbq\nonumber
A_j := 
\begin{bmatrix}
-\frac{1}{\tau_{g,j}}& 0\\
\frac{1}{\tau_{b,j}}& -\frac{1}{\tau_{b,j}}
\end{bmatrix}
& \quad\quad&
B_j :=
\begin{bmatrix}
\frac{1}{\tau_{g,j}}\\
0
\end{bmatrix}.
\IEEEeq
A classical Lyapunov function for linear system  \eqref{eq:turbine_governor:compact} takes the form
\IEEEbq\nonumber
\mathcal V_j =\frac{1}{2} \tilde y_j^T P_j \tilde y_j
\IEEEeq
where $P_j$ is a positive definite matrix. Hence its time derivative along the system trajectory is
\IEEEbq
\dot{\mathcal V_j} &=& \tilde y_j^T P_j \dot{\tilde y}_j  \nonumber 
\\
&=  & \frac{1}{2} \tilde y_j^T (P_jA_j + A_j^T P_j) \tilde y_j + \tilde y_j^T P_j B_j\tilde p_j^c. \label{eq:Lyapunov_derivative:governorturbine}
\IEEEeq
Since both eigenvalues of $A_j$ are negative, Lyapunov theory tells us that we can find $P_j>0$ such that $P_j A_j + A_j^T P_j <0$. Indeed, if for all $j \in \generator$ we can bound $\dot{\mathcal V_j}$ in a particular form as given in the lemma below, then we can show that the closed-loop equilibrium is asymptotically stable.       
\begin{Lemma}\label{lemma.1}
Suppose, for all $j \in \generator$, that there are $P_j >0 $ such that \eqref{eq:Lyapunov_derivative:governorturbine} implies 
\IEEEbq\label{eq:lemma_1.1}
\dot{\mathcal V}_j \leq -\alpha_j \tilde p_j^2 + \beta_j \tilde \omega_j^2 - \gamma_j (\tilde a_j + \eta_j \tilde p_j)^2 
\IEEEeq
where $\alpha_j, \gamma_j>0$, $\beta_j < D_j$, $\eta_j$ is arbitrary, and $4\alpha_j (D_j - \beta_j) >1$. Then $\mathcal V_{\text{total}}:= \mathcal V_0 + \sum_{j\in \generator} \mathcal V_j$ is a Lyapunov function which ensures the closed-loop equilibrium is asymptotically stable.    
\end{Lemma}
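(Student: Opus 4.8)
The plan is to verify the two defining properties of a Lyapunov function for $\mathcal{V}_{\text{total}}$ — positive definiteness about the equilibrium and a nonpositive derivative along trajectories — and then upgrade the latter to asymptotic stability via LaSalle's invariance principle. For positive definiteness I would argue componentwise: each $\mathcal{V}_j=\frac12 \tilde y_j^T P_j \tilde y_j$ is positive definite in $\tilde y_j=(\tilde a_j,\tilde p_j)$ because $P_j>0$, while $\mathcal{V}_0$ is positive definite in the remaining coordinates, namely the generator frequency deviations $\tilde\omega_j$, $j\in\generator$, and the angle differences $\theta_{ij}$. The kinetic term $\frac12\sum_{j\in\generator}M_j\tilde\omega_j^2$ handles the former, and the integral term is positive definite in $\theta_{ij}$ about $\theta_{ij}^*$ precisely under the condition $-\frac{\pi}{2}<\theta_{ij}^*<\frac{\pi}{2}$ that makes $\sin u-\sin\theta_{ij}^*$ cross zero upward at $u=\theta_{ij}^*$; this is where the assumed asymptotic stability of the open-loop equilibrium enters. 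Since $M_j=0$ on load buses, I would stress that the relevant state is the reduced one $(\theta,\tilde\omega_\generator,\tilde a_\generator,\tilde p_\generator)$, with $\omega_\load$ slaved to $\theta$ through the algebraic swing equations, so that $\mathcal{V}_{\text{total}}$ is indeed positive definite in all independent coordinates.

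For the derivative I would add the already-established bound \eqref{eq:energyfunction_dot_0:4} on $\dot{\mathcal{V}}_0$ to the hypothesized bounds \eqref{eq:lemma_1.1} on each $\dot{\mathcal{V}}_j$. The load contribution $-\sum_{j\in\load}D_j\tilde\omega_j^2$ is already nonpositive, so the work is in the generator terms, which collect into
\[
\sum_{j\in\generator}\left[-(D_j-\beta_j)\tilde\omega_j^2 + \tilde\omega_j\tilde p_j - \alpha_j\tilde p_j^2 - \gamma_j(\tilde a_j+\eta_j\tilde p_j)^2\right].
\]
The last square is nonpositive because $\gamma_j>0$, and the remaining quadratic in $(\tilde\omega_j,\tilde p_j)$ carries the symmetric matrix $\left[\begin{smallmatrix}-(D_j-\beta_j)&1/2\\1/2&-\alpha_j\end{smallmatrix}\right]$, which is negative definite exactly when $D_j>\beta_j$ and its determinant $\alpha_j(D_j-\beta_j)-\frac14>0$, i.e. when $4\alpha_j(D_j-\beta_j)>1$ — precisely the two inequalities assumed. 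Hence $\dot{\mathcal{V}}_{\text{total}}\le 0$, with equality forcing $\tilde\omega_j=0$ on every bus and $\tilde p_j=\tilde a_j=0$ on every generator.

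Finally I would invoke LaSalle. On the set $\{\dot{\mathcal{V}}_{\text{total}}=0\}$ the frequency and governor/turbine deviations vanish, but the angle differences are a priori unconstrained, so I must identify the largest invariant subset. Holding $\omega\equiv\omega^*$, $p_\generator\equiv p_\generator^*$, $a_\generator\equiv a_\generator^*$ and substituting into the generator swing equations \eqref{eq:model:swing} and the load algebraic equations forces $F_j(\theta)=F_j(\theta^*)$ for all $j\in\node$. The condition $-\frac{\pi}{2}<\theta_{ij}^*<\frac{\pi}{2}$ renders the power-flow map $\theta\mapsto F(\theta)$ locally injective in the angle differences, so on a sufficiently small neighborhood this yields $\theta_{ij}=\theta_{ij}^*$ on every edge; the invariant set reduces to the equilibrium and local asymptotic stability follows.

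The main obstacle, I expect, is this last step rather than the algebra: carefully reducing the differential--algebraic system to a genuine ODE on the constraint manifold, so that LaSalle's principle legitimately applies, and then extracting local injectivity of $F$ from the security condition, which is what converts the merely semidefinite $\dot{\mathcal{V}}_{\text{total}}$ into true asymptotic stability. Closely related is the bookkeeping that the equilibrium is a rotating trajectory, so every statement should be read modulo a common angle shift, i.e. in terms of the differences $\theta_{ij}$.
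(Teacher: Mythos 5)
Your proposal is correct, and its core algebra is exactly the paper's: the paper likewise adds the bound \eqref{eq:energyfunction_dot_0:4} to the hypothesized bounds \eqref{eq:lemma_1.1}, and observes that the collected generator terms $-(D_j-\beta_j)\tilde \omega_j^2 + \tilde \omega_j \tilde p_j - \alpha_j \tilde p_j^2$ are non-positive and vanish only at $\tilde \omega_j = \tilde p_j = 0$, which is your $2\times 2$ matrix criterion $\beta_j < D_j$, $4\alpha_j(D_j-\beta_j)>1$ stated in words. Where you genuinely diverge is the final step. The paper asserts that $\mathcal V_{\text{total}}$ is positive definite and that $\dot{\mathcal V}_{\text{total}}$ is \emph{negative definite} near the equilibrium, and concludes by the direct Lyapunov theorem; you instead treat $\dot{\mathcal V}_{\text{total}}$ as merely negative semidefinite and finish with LaSalle's invariance principle plus local injectivity (modulo uniform angle shift) of the power flow map $F$. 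Your extra step is not pedantry --- it repairs a real gap in the paper's one-line conclusion. The upper bound on $\dot{\mathcal V}_{\text{total}}$ contains no negative term in the angle variables: take any $\theta$ near $\theta^*$ with $F_j(\theta) = F_j(\theta^*)$ for all $j \in \load$ (a set of dimension at least $|\generator|-1$ modulo uniform shift, hence nontrivial whenever $|\generator| \geq 2$), together with $\omega = \omega^*$, $a_\generator = a_\generator^*$, $p_\generator = p_\generator^*$; by \eqref{eq:energyfunction_dot_0:3} and \eqref{eq:Lyapunov_derivative:governorturbine} one gets $\dot{\mathcal V}_{\text{total}} = 0$ exactly at such non-equilibrium states arbitrarily close to the equilibrium, so negative definiteness fails in the full state. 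Your invariance argument --- on the zero set, invariance of the generator swing equations forces $F_\generator(\theta) = F_\generator(\theta^*)$ as well, and the security condition $-\frac{\pi}{2} < \theta_{ij}^* < \frac{\pi}{2}$ makes the Jacobian of $F$ a weighted Laplacian with positive weights, hence $F$ locally injective modulo shift --- is precisely what converts semidefiniteness into asymptotic stability, and your cautions about reducing the differential--algebraic dynamics (loads have $M_j=0$) to an ODE and about reading everything modulo the rotating reference are the right technical hygiene. Both your argument and the paper's rely on open-loop asymptotic stability to make the integral term of $\mathcal V_0$ positive definite; you make that dependence explicit, the paper leaves it implicit in the discussion preceding the lemma.
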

\begin{proof}
If all the conditions in Lemma \ref{lemma.1} are satisfied, then from \eqref{eq:energyfunction_dot_0:4} we have
\IEEEbq
\dot{\mathcal V}_{\text{total}} &\leq& - \sum_{j\in \load} D_j\tilde \omega_j^2 
 - \sum_{j\in \generator } \gamma_j(\tilde a_j - \eta_j \tilde p_j)^2 \nonumber 
\\
& ~& +  \sum_{j\in \generator} (-(D_j-\beta_j) \tilde \omega_j^2 + \tilde \omega_j \tilde p_j - \alpha_j \tilde p_j^2) \nonumber
\IEEEeq
where the third summation is non-positive and evaluates to zero only when $\tilde \omega_j = \tilde p_j = 0$ for all $j \in \generator$. It is straightforward that $\mathcal V_{\text{total}}$ is positive definite and $\dot{\mathcal V}_{\text{total}}$ is negative definite in a neighborhood of the closed-loop equilibrium, with both their origins at the closed-loop equilibrium, and hence the closed-loop equilibrium is asymptotically stable.
\end{proof}
Lemma \ref{lemma.1} inspires us to find a sufficient condition for the closed-loop equilibrium to be asymptotically stable. We drop the subscript $j$ and look at one generator only. 

The choice of $P$ for \eqref{eq:Lyapunov_derivative:governorturbine} to render a bound as in \eqref{eq:lemma_1.1} is not unique. Here we choose $P$ to be diagonal with positive entries $P_{11}$ and $P_{22}$. To ensure $PA + A^T P <0$, we have
\IEEEbq\nonumber
\frac{P_{11}}{\tau_g} > \frac{P_{22}}{4\tau_b}.
\IEEEeq
A calculation from \eqref{eq:Lyapunov_derivative:governorturbine} gives us
\begin{IEEEeqnarray}{rCl} \nonumber
\dot{\mathcal V} &=& -\frac{P_{11}}{\tau_g} \tilde \alpha^2 - \frac{P_{22}}{\tau_b} \tilde p^2 +\frac{P_{22}}{\tau_b} \tilde \alpha \tilde p+ \frac{P_{11}}{\tau_g} \tilde \alpha \tilde p^c
\\
&=&  -(\frac{P_{22}}{\tau_b}- \frac{P_{22}}{4\gamma \tau_b^2})\tilde p^2 
  + \frac{P_{11}^2}{4 \tau_g (P_{11} - \gamma \tau_g)} (\tilde p^c)^2  \nonumber
\\
&~& - \gamma \left(\tilde \alpha -\frac{P_{22}}{2\gamma \tau_b}  \tilde p \right)^2 \nonumber  
\\
&~& - (\frac{P_{11}}{\tau_g} - \gamma)\left(\tilde \alpha - \frac{P_{11} \cdot \tilde p^c}{2(P_{11} - \gamma \tau_g)}\right)^2 \label{eq:der:stability_condition:2} 
\IEEEeq
for arbitrary $\gamma \in (\frac{P_{22}}{4 \tau_b}, \frac{P_{11}}{\tau_g})$. Motivated by the form of \eqref{eq:lemma_1.1}, we assume the following condition to bound the term of $(\tilde p^c)^2$ in \eqref{eq:der:stability_condition:2} by a term of $\tilde \omega^2$.    
\begin{Condition}\label{cond.3}
The generator control $p^c(\omega)$, designed as \eqref{eq:decentralizedcontrol_1}, is locally Lipschitz continuous in a neighborhood of $\omega^*$, i.e., for all $\omega$ in that neighborhood, there exists a real constant $L \geq 0$ such that 
\IEEEbq\nonumber
|p^c(\omega) - p^c(\omega^*)| \leq L |\omega- \omega^*|.
\IEEEeq
\end{Condition}
With Condition \ref{cond.3} we have $(\tilde p^c)^2 \leq L^2 \tilde \omega^2$ in the neighborhood referred to, and hence \eqref{eq:der:stability_condition:2} leads to \eqref{eq:lemma_1.1} by taking
\IEEEbq\nonumber
\alpha_j = \frac{P_{22}}{\tau_b}- \frac{P_{22}}{4\gamma \tau_b^2} &\quad\quad& \beta_j =  \frac{P_{11}^2L^2}{4 \tau_g (P_{11} - \gamma \tau_g)}
\\
\gamma_j = \gamma  &\quad\quad&  \eta_j  = -\frac{P_{22}}{2\gamma \tau_b}
\IEEEeq
which satisfies $\alpha_j > 0$,  $\gamma_j > 0$. We make the following transformation of variables 
\IEEEbq\label{eq:transform_to_xi}
\xi = \frac{P_{22}}{4\tau_b}, \quad \sigma = \frac{\xi}{\gamma }, \quad z =\frac{\tau_g \gamma}{P_{11}} 
\IEEEeq
so that $\xi > 0$, $0 <\sigma < 1$ and $0 < z <1$. Hence the conditions $\beta_j < D_j$ and $4\alpha_j (D_j - \beta_j) >1$ in Lemma \ref{lemma.1} can be written as 
\IEEEbq\label{eq:der:stability_condition:3.1}
&D - \frac{L^2 \xi}{4 \sigma z(1-z)} > 0&  
\\
&16 \xi (1-\sigma)\left(D-\frac{L^2 \xi}{4 \sigma z(1-z)}\right) >1&. \label{eq:der:stability_condition:3.2}
\IEEEeq
Subject to the domain of $(\xi,\sigma,z)$ and \eqref{eq:der:stability_condition:3.1}, the maximum of the left-hand-side of \eqref{eq:der:stability_condition:3.2} is $\frac{D^2}{L^2}$, achieved by successively taking $z = \frac{1}{2}$, $\frac{L^2 \xi}{\sigma} = \frac{D}{2}$, and $\sigma = \frac{1}{2}$. Hence as long as $L<D$, we can find positive definite, diagonal matrix $P$ through inverse transformation of \eqref{eq:transform_to_xi}, such that all the conditions in Lemma \ref{lemma.1} are satisfied. 

We come back to the problem for the entire network and recover the indexing subscripts. Our result is formally summarized in the following theorem. We skip the proof since it is obvious given all the preceding analysis.    
\begin{Theorem}\label{thm.2}
Consider an equilibrium $(\theta^*, \omega^*, a_\generator^*, p_\generator^{c,*}, p^*)$ for the closed-loop system. Suppose the corresponding open-loop equilibrium $(\theta^*, \omega^*)$ is asymptotically stable. Suppose, for all $j \in \generator$, that Condition \ref{cond.3} holds within a neighborhood of $\omega^*$ for a constant $ 0 \leq L_j < D_j$. Then the closed-loop equilibrium $(\theta^*, \omega^*, a_\generator^*, p_\generator^{c,*}, p^*)$ is asymptotically stable.
\end{Theorem}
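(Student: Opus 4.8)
The plan is to assemble the pieces developed above into a single Lyapunov argument via Lemma~\ref{lemma.1}. The theorem has exactly two hypotheses, and each one supplies precisely one of the two ingredients that Lemma~\ref{lemma.1} requires. First I would invoke the asymptotic stability of the open-loop equilibrium $(\theta^*,\omega^*)$ to certify that $\mathcal V_0$ in \eqref{eq:energyfunction_0} is a legitimate Lyapunov candidate for the network part: as noted after \eqref{eq:energyfunction_0}, open-loop stability forces each integral term $\int_{\theta_{ij}^*}^{\theta_{ij}} Y_{ij}(\sin u - \sin\theta_{ij}^*)\,du$ to be positive definite with origin at $\theta_{ij}^*$ for $\theta_{ij}$ near $\theta_{ij}^*$, so $\mathcal V_0$ is locally positive definite, and the bound \eqref{eq:energyfunction_dot_0:4} on $\dot{\mathcal V}_0$ is already in hand.

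The substantive content is the second hypothesis, $0 \le L_j < D_j$, which I would use to construct, for each $j \in \generator$, a diagonal $P_j > 0$ meeting every requirement of Lemma~\ref{lemma.1}. Dropping the subscript, I choose $P = \mathrm{diag}(P_{11},P_{22})$ with $P_{11}/\tau_g > P_{22}/(4\tau_b)$, which guarantees $PA + A^TP < 0$, and reduce $\dot{\mathcal V}_j$ from \eqref{eq:Lyapunov_derivative:governorturbine} to the completed-square form \eqref{eq:der:stability_condition:2}. Condition~\ref{cond.3} then lets me replace $(\tilde p^c)^2$ by $L^2\tilde\omega^2$ locally, casting $\dot{\mathcal V}_j$ into the shape \eqref{eq:lemma_1.1} with the explicit parameters $\alpha_j,\beta_j,\gamma_j,\eta_j$ listed before the theorem; by construction $\alpha_j,\gamma_j>0$. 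What remains is to verify the two numerical conditions $\beta_j < D_j$ and $4\alpha_j(D_j-\beta_j)>1$, which after the change of variables \eqref{eq:transform_to_xi} become \eqref{eq:der:stability_condition:3.1} and \eqref{eq:der:stability_condition:3.2}.

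Here lies the one genuinely quantitative step, and I regard it as the main obstacle: maximizing the left-hand side of \eqref{eq:der:stability_condition:3.2} over the admissible region for $(\xi,\sigma,z)$ subject to \eqref{eq:der:stability_condition:3.1}. The preceding analysis shows this maximum equals $D_j^2/L_j^2$, attained by successively setting $z=\tfrac12$, $L^2\xi/\sigma = D/2$, and $\sigma=\tfrac12$ (for $L_j=0$ the expression is unbounded above, which is even more favorable). Since the hypothesis $L_j < D_j$ gives $D_j^2/L_j^2 > 1$, there is a feasible $(\xi,\sigma,z)$ rendering \eqref{eq:der:stability_condition:3.2} strict, and inverting \eqref{eq:transform_to_xi} recovers an admissible diagonal $P_j>0$ (with the accompanying $\gamma$ automatically lying in the interval enforcing $PA+A^TP<0$). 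Thus every hypothesis of Lemma~\ref{lemma.1} holds for all $j \in \generator$.

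Finally I would apply Lemma~\ref{lemma.1} directly to conclude that $\mathcal V_{\text{total}} = \mathcal V_0 + \sum_{j\in\generator}\mathcal V_j$ is positive definite with negative definite derivative on a neighborhood of the closed-loop equilibrium, whence asymptotic stability. The only bookkeeping subtlety is that three neighborhoods appear -- where $\mathcal V_0$ is positive definite, where Condition~\ref{cond.3} holds, and where $\dot{\mathcal V}_{\text{total}}$ is negative definite -- but each is a neighborhood of the same equilibrium, so their intersection is again such a neighborhood and the conclusion is unaffected. Given Lemma~\ref{lemma.1} together with the single-generator construction already carried out, the remaining work is purely the combination of the per-generator bounds with \eqref{eq:energyfunction_dot_0:4}, consistent with the paper's remark that the proof is omitted as obvious.
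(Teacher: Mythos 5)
Your proposal is correct and follows exactly the route the paper intends: it assembles Lemma~\ref{lemma.1}, the diagonal-$P_j$ construction with Condition~\ref{cond.3}, and the maximization over $(\xi,\sigma,z)$ yielding the threshold $D_j^2/L_j^2 > 1$ when $L_j < D_j$ --- precisely the ``preceding analysis'' the paper invokes when it skips the proof as obvious. Your added remarks (the $L_j = 0$ case and the intersection of the three neighborhoods) are minor clarifications, not deviations.
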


Note that Theorem \ref{thm.2} only gives a sufficient condition for stability. Its derivation depends on the particular (diagonal) structure of $P$ and particular form of bound on the derivative of Lyapunov function as \eqref{eq:lemma_1.1}. Hence this condition may be conservative. It is possible in practice that the control is designed so that $L_j > D_j$ but the system is still stable. As ongoing work, we try to obtain less conservative stability conditions by searching for $P$ in a larger subset of positive definite matrices.

\section{CASE STUDY}\label{sec:simulation}

We illustrate the performance of the proposed control through a simulation of the IEEE New England test system shown in Fig. \ref{fig:IEEE_39}.

\begin{figure}[thpb]
      \centering
      \includegraphics[height=7.5 cm]{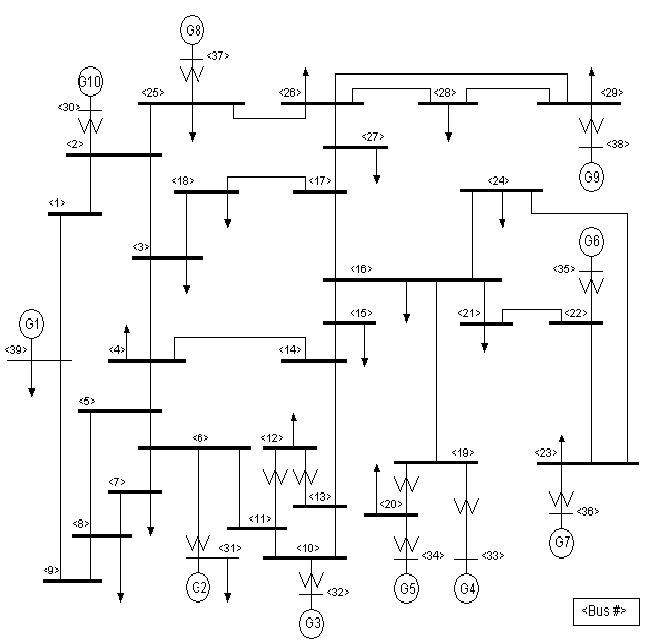}
      \caption{IEEE New England test system \cite{cheung2009power}.}
      \label{fig:IEEE_39}
   \end{figure} 
This system has 10 generators and 39 buses, and a total load of about 60 per unit (pu) where 1 pu represents 100 MVA. Details about this system including parameter values can be found in Power System Toolbox \cite{cheung2009power}, which we use to run the simulation in this section. Compared to the model \eqref{eq:model:swing}--\eqref{eq:model:governor}, the simulation model is more detailed and realistic, with transient generator dynamics, excitation and flux decay dynamics, changes in voltage and reactive power, non-zero line resistances, etc. 

The primary frequency control of generator or load $j$ is designed with cost function $c_j (p_j) = \frac{R_j}{2}(p_j - p_j^{\text{set}})^2$, where $p_j^{\text{set}}$ is the power injection at the setpoint, an initial equilibrium point solved from static power flow problem. By choosing this cost function, we try to minimize the deviation of system state from the setpoint, and have the control $p_j = \left[p_j^{\text{set}} - \frac{1}{R_j} \omega_j\right]_{\underline p_j}^{\overline p_j}$ from \eqref{eq:decentralizedcontrol_1}\eqref{eq:decentralizedcontrol_2} \footnote{Only the control for $p_j$ is written since the same holds for $p_j^c$}. We consider the following two cases in which the generators and loads have different control capabilities and hence different $[\underline p_j, \overline p_j]$: 
\begin{enumerate}
\item All the 10 generators have $[\underline p_j, \overline p_j] = [ p_j^{\text{set}}(1-c), p_j^{\text{set}}(1+c)]$, and all the loads are uncontrollable; 
\item Generators 2, 4, 6, 8, 10 (which happen to provide half of the total generation) have the same bounds as in case (1). All the other generators are uncontrollable, and all the loads have $[\underline p_j, \overline p_j] = [ p_j^{\text{set}}(1 + c/2),  p_j^{\text{set}}(1 - c/2)]$, if we suppose $p_j^{\text{set}} \leq 0$ for loads $j \in \load$. 
\end{enumerate}
Hence cases (1) and (2) have the same total capacity for control across the network. Case (1) only has generator control while in case (2) the set of generators and the set of loads each has half the total control capacity. We select $c=10\%$, which implies the total capacity for control is about 6 pu. For all $j \in \node$, the feedback gain $1/R_j$ is selected as $25p_j^{\text{set}}$, which is a typical value in practice meaning a frequency change of 0.04 pu (2.4 Hz) causes the change of power injection from zero all the way to the setpoint. Note that this control is the same as frequency droop control, which implies that indeed frequency droop control implicitly solves an OFC problem with cost functions we use here. However, our controller design allows for more flexibility by using other cost functions.    

In the simulation, the system is initially at the setpoint with 60 Hz frequency. At time $t = 0.5$ second, buses 4, 15, 16 each make 1 pu step change in their real power \emph{consumptions}, causing the frequency to drop. Fig. \ref{fig:g_vs_l} shows the frequencies of all the 10 generators under the two cases above, (1) with red and (2) with black. We see in both cases that frequencies of different generators are close to each other with relatively small differences during transient, and they are synchronized towards the new steady state. Compared with generator-only control, the combined generator-and-load control improves both the transient and steady-state frequency, even though the total control capacities in both cases are the same.    
   \begin{figure}[thpb]
      \centering
      \includegraphics[height=4.9 cm]{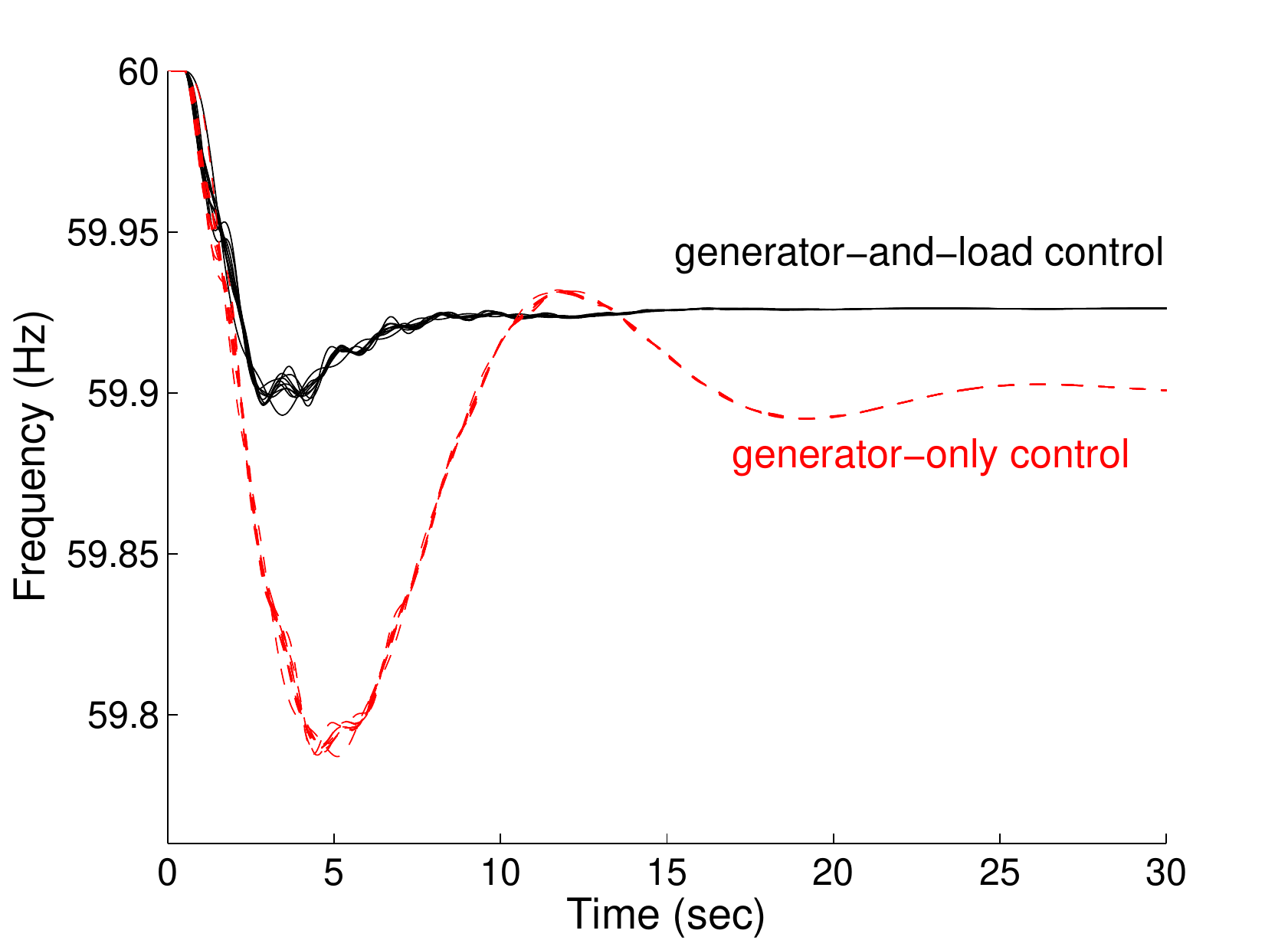}
      \caption{Frequencies of all the 10 generators under case (1) only generators are controlled (red) and case (2) both generators and loads are controlled (black). The total control capacities are the same in these two cases.}
      \label{fig:g_vs_l}
   \end{figure}

\section{CONCLUSION AND FUTURE WORK}\label{sec:conclusion}

We have presented a systematic method to jointly design generator and load-side primary frequency control, by formulating an optimal frequency control (OFC) problem to characterize the desired equilibrium point of the closed-loop system.  OFC minimizes the total generation cost and user disutility subject to power balance across the network. The control is completely decentralized, depending only on local frequency. For the closed-loop system under the proposed control, stability analysis with Lyapunov method has led to a sufficient condition for an equilibrium point to be asymptotically stable. Simulations show that, the combined generator-and-load control improves both transient and steady-state frequency, compared to the traditional control on generator side only, even when the total control capacity remains the same.    

We have shown the stability of equilibrium points without studying their attraction regions, and particularly the change of attraction regions due to closing the loop with the proposed control. It is an interesting topic to work on for the next step. Moreover, as discussed above it is our future work to understand how conservative the sufficient stability condition in Theorem \ref{thm.2} is and derive less conservative, or sufficient and necessary conditions for stability. 
We are also interested in performing stability analysis with a more detailed model like that in \cite{tsolas1985structure} with flux decay and time-varying voltage magnitudes and reactive power flows.





\end{document}